\newtheorem{thm}{Theorem}
\newtheorem{lemma}[thm]{Lemma}
\title{A Greedy Annealing-based Route Generation Algorithm}
\author{
  Jordan Makansi \\
  Independent Consultant
   \And
  David E. Bernal Neira \\
  Davidson School of Chemical Engineering, Purdue University
}
\begin{document}
\maketitle

\raggedbottom

\begin{abstract}
Routing and scheduling problems with time windows have long been important optimization problems for logistics and planning.
Many classical heuristics and exact methods exist for such problems.
However, there are no satisfactory methods for generating routes using quantum computing (QC), mainly because of two reasons: inequality constraints and the trade-off of feasibility and solution quality.
Inequality constraints are typically handled using slack variables, and samples are filtered to find feasible solutions.
These challenges are amplified in the presence of noise inherent in QC.
Here, we propose a greedy algorithm for generating routes using information from all samples obtained from the quantum computer.
By noticing the relationship between qubits in our formulation as a directed acyclic graph (DAG), we designed an algorithm that adaptively constructs a feasible solution. 
We prove its convergence to a feasible solution (without post-processing) and an optimal solution if an exact solver solves the QUBO subproblem.
We demonstrate its efficacy by solving the Fleet Sizing Vehicle Routing Problem with Time Windows (FSVRPTW).
Our computational results show that this method obtains a lower objective value than the current state-of-the-art annealing approaches, both classical and hybrid quantum-classical, for the same amount of time using D-Wave’s Hybrid Solvers. 
We also show its robustness to noise on D-Wave’s \texttt{Advantage2} through computational results as compared to the standard filtering approach on \texttt{DWaveSampler}.
\end{abstract}

\keywords{QUBO \and quantum annealing \and VRP \and routing \and greedy \and D-Wave.}

\section{Introduction}
One of the most important problems in operations research is servicing customers within a certain time window. 
The simplest version of this is the traveling salesperson problem with time windows. 
In this paper, we study the Fleet Size Vehicle Routing Problem with Time Windows (FSVRPTW), which determines the minimum number of vehicles required to service all customers within their respective time windows.
The FSVRPTW problem is practically relevant on its own, but it is also a subproblem in many solution methods for more complex routing problems, such as column generation, Bender's decomposition, branch-and-cut (B\&C), and branch-and-price (B\&P). 
However, in the aforementioned solution approaches, an exact solution method to the subproblem is not always necessary - but a computationally tractable one is. Therefore, the community would benefit from faster solutions to this problem. 

Quantum computing (QC) offers exciting possibilities for performing specific computational tasks better than classical solvers and may be able to speed up challenging optimization problems such as route planning.
Specific algorithms have proven advantage over their classical counterparts in fault-tolerant conditions, such Grover's algorithm for search over unstructured databases~\cite{grover1996fast} proved to improve over the best possible classical alternative and Shor's factorization of integers into prime factors~\cite{shor1999polynomial}, better than any known classical method.
In the near future, it is not certain whether a quantum advantage can be observed due to the noise and current size capabilities of quantum computers. As such, many hybrid approaches involving classical and quantum computing have been proposed. However, it remains relevant to determine the best quantum approach to certain problems so that when the QC capabilities become more advanced, the solution methods are readily available.

Routing problems on quantum computers have been solved since Feld et al. \cite{QVRP-8}. However, no annealing-based methods have shown to be practical for generating routes themselves. Existing approaches that have shown to be capable of solving problems at scale generate routes classically and solve a set cover problem. On the other hand, the existing annealing-based approaches do not scale well due to slack variables and other inefficient encoding schemes \cite{9781399}.
Here we review them:
\cite{QVRP-4} solves the traveling salesperson problem with time windows (TSPTW) using slack variables and one-hot encoding, which makes the number of qubits impractical. 
\cite{QVRP-1} uses position-based indexing. While this works under the single-agent setting in TSPTW, it loses much in optimality in a multi-agent setting because it assumes each agent makes a maximum number of stops. 
\cite{QVRP-2} creatively discretizes time using a state-based description, but the formulation of time windows has a discrepancy with the capacity and state description. 
\cite{QVRP-7} presents several QUBO formulations but only solves small-sized problems and, aside from the route-based formulation, has difficulty obtaining feasible solutions. 
\cite{QVRP-6}, \cite{QVRP-11}, and \cite{QVRP-10} all present creative hybrid solution methods but still generate routes classically.

The majority of existing QC solutions to Sherrington-Kirkpatrick (SK) Ising models involve sampling the QC a number of times and eventually selecting the sample with the lowest energy. However, emerging research has shown the potential to use all of the samples from the QC to take advantage of the correlation between samples \cite{QC-15}. We review them here.
\cite{QC-11} uses an iterative approach that fixes qubits based on their one-body expectation value, computed by using all of the samples returned from the annealer. Qubits with expectation values above a certain threshold are fixed to a particular value based on a majority vote, and a smaller SK Ising model is solved at the next iteration. The expectation value, in some sense, measures how certain we are of its value.
\cite{QC-12}, unlike \cite{QC-11}, has two separate procedures for selecting qubits to fix and their values. After selecting qubits, their values are determined by brute-force evaluation of the expectation value of the objective function, over all possible values of the selected variables. 
\cite{QC-13} uses two phases instead of an iterative approach: 1) Run QAOA. 2) Compute the correlation matrix given by the output from sampling the QC circuit constructed from QAOA. The values of the principal eigenvector are rounded to obtain a final solution. 
Other efforts to improve solution quality, e.g., unbalanced penalization \cite{QAPP-3}, still cannot guarantee a feasible solution and require heavy classical pre-processing for tuning penalty parameters. They also do not disclose computation time in their computational results. 

However, these approaches do not directly apply to \textit{constrained} optimization problems; fixing variables may decrease the expectation value of the energy, but it might also violate constraints. The fixed values are maintained permanently throughout the remainder of the algorithm. For constrained problems, we require a delicate approach to ensure a feasible solution can still be returned after the variables are fixed. We propose an algorithm that takes advantage of both emerging research to utilize the correlation between samples and is guaranteed to return a feasible solution to the FSVRPTW.

In this paper, we present an annealing-based algorithm for generating routes under time window constraints that is suitable for the NISQ era \cite{Preskill_2018} by its 
convergence to a feasible solution, and competitive computation time and solution quality.

Our main contributions are as follows:

\begin{enumerate}
\item An annealing-based algorithm for generating routes that represents the samples returned by the annealer as a DAG. 
\item Prove that it converges to a feasible solution and, if an exact solver solves the subproblem, converges to the optimal solution. 
\item Show computational results on D-Wave, benchmarking it against classical, hybrid, and quantum annealing-based approaches based on computation time and optimality for a variety of practical-sized benchmark instances.
\end{enumerate}

\section{Preliminaries} \label{sec:prelim}
We now review the solutions and technologies used throughout this work.

\textbf{Mixed Integer Programming (MIP) Solver}:  An MIP solver will solve a Mixed Integer Program exactly to obtain an optimal solution using Branch-and-Bound, which has exponential complexity in worst case \cite{gurobi, morrison2016branch}.  A generic MIP is shown in equation \ref{eq:mip}.  In this work the MIP solver \texttt{Gurobi 11.0.1} was used used to obtain the optimal solution to Problem \eqref{eq:qubo_fsvrptw}.

\begin{subequations} \label{eq:mip}
\begin{alignat}{2}
& \min_{x,y} & \mathbf{c}^\top \mathbf{x} + \mathbf{d}^\top \mathbf{y}
\\
& \text{s.t.} &\mathbf{A} \mathbf{x} + \mathbf{B} \mathbf{y} \leq \mathbf{b}
\\
&& \mathbf{x} \in \mathbb{R}
\\
&& \mathbf{y} \in \mathbb{Z}
\end{alignat}
\end{subequations}
\textbf{Quadratic Unconstrained Binary Optimization (QUBO)}.
Many discrete optimization problems can be formulated as a Quadratic Unconstrained Binary Optimization (QUBO) problem. It is the common format used to input optimization problems into annealers. Given square matrix $Q \in \mathbb{R}^{n \times n}$, the problem is to minimize $x^T Q x$ over binary variables $x_i \in \{0,1\}^n$: 
\begin{equation} \label{qubo}
\min_{x\in\{0,1\}^n} x^T Q x 
\end{equation}
Coefficients for linear terms in an optimization problem may be represented in the QUBO form by the diagonal elements of the matrix $Q$ since $x^2_i=x_i$. 
To represent this optimization problem on a quantum device, the QUBO is transformed into an Ising model with the Hamiltonian consisting of a weighted sum of products of spin variables \cite{Kirkpatrick1983SimulatedAnnealing}.  We use $Z_i$ to denote the spin variables. We may convert binary variables $x_i$ into spin variables $Z_i \in \{-1,1\}$ by $x_i = \frac{Z_i +1}{2}$ and represent them in an Ising model of the following Hamiltonian 
\begin{equation} \label{ising}
\mathcal{H} := \sum_{i=1} v_i Z_i + \sum_{j<i} J_{ij} Z_i Z_j + C
\end{equation}

where $J_{ij}, v_i$ encode the elements in the matrix $Q$. $C$ is a constant offset, which is typically ignored since it does not affect the optimal objective value. There is a one-to-one correspondence between the spin variables $Z_i$ and the binary variables $x_i$. Once a problem is cast as a QUBO, it may be solved on an annealer. A quantum annealer works by exploiting quantum mechanical effects to find low energy states of the Hamiltonian in Eq. \eqref{ising} \cite{morita2008mathematical}. The output of an annealer is non-deterministic, so samples are used to obtain a representative distribution. Given $M$ samples, the expected value of a variable $Z_i$ is $\langle Z_i \rangle=\sum^M_m Z^m_i/M$, where $m$ is the index of the sample. 

\textbf{Quantum Annealer (QPU)}:  A QPU uses adiabatic evolution to solve QUBOs by gradually evolving from an easy-to-solve initial state to the problem's final configuration, utilizing quantum tunneling to avoid local minima \cite{DWaveQPUAnnealing2024}.  We used D-Wave's \texttt{Advantage} solver with \texttt{num\_reads}=1000.

\textbf{Simulated Annealing Sampler (SAS)}:  A classical metaheuristic algorithm that mimics the physical process of annealing.  It leverages the Markov Chain Monte Carlow method to explore the solution space. The probability of accepting a new solution is based on the Gibbs distribution.  This probability is decreased according to an \textit{annealing schedule}.  We use the \texttt{SimulatedAnnealingSampler} (SAS) provided by D-Wave, and we use the default settings \cite{DWavesysDWavenealx2014}.

\textbf{LeapHybridCQMSampler (CQM)}:  A hybrid classical/quantum annealer that is specifically designed to solve problems with constraints. It combines several classical heuristics, such as Simulated Annealing, Tabu Search, and others, with Quantum Annealing to find a solution \cite{DWaveHybridSolvers2021}.  Its algorithmic parameters include \texttt{num\_reads}, which is the number of samples returned by the annealer. The more samples, the greater the chance of finding a solution with low energy. We set \texttt{num\_reads}=1000.

\section{Problem Formulation}
We are given a timetable of customers, each with a time window $[e_i, l_i]$ and a service time $q_i$; an example is shown in Tab. \ref{fig:Fig_1}. The start of service must begin within the time window.

\begin{figure}
\begin{tabular}[b]{@{}llll@{}}
\multicolumn{3}{c}{Timetable}\\
\toprule
i & $e_i$ & $l_i$ & $q_i$\\ \midrule
0 & 0.0 & $\infty $ & 0.0 \\
1 & 1.0& 1.15 & 1.0 \\
2 & 3.5& 3.75 & 1.0 \\
N & 0.0& $\infty $ & 0.0 \\ \bottomrule
\end{tabular}
\qquad
\begin{tabular}[b]{@{}lllll@{}}
\multicolumn{5}{c}{Dist. Matrix}\\
\toprule
i & 0 & 1 & 2 & N \\ \midrule
0 & 0 & 1 & 3 & 0 \\
1 & 1 & 0 & 2 & 1 \\
2 & 3 & 2 & 0 & 3 \\
N & 0 & 1 & 3 & 0 \\ \bottomrule
\end{tabular}
\qquad
    \centering
  \includegraphics[width=0.52\textwidth]{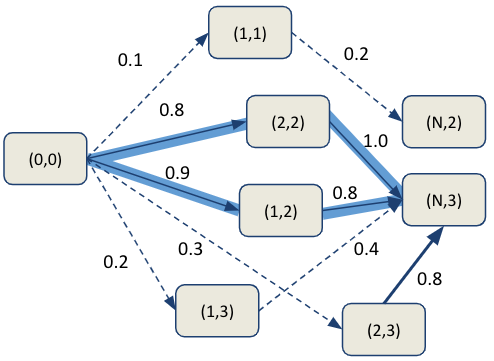}
    \captionlistentry[table]{A table beside a figure}
    \captionsetup{labelformat=andtable}
    \caption{Timetable and distance matrix for $N$=2 example. Initial set of variables $X^0$ represented as a DAG (denoted DAG($X^0$)), continued from example shown in the Table, labeled with possible one-body expectation values from step \ref{exp_end} in Alg. \ref{alg:greedy}. Variables selected in step \ref{select} in Alg. \ref{alg:greedy}, using $\theta = 0.5$, are marked with solid arrows. Paths found in step \ref{find_paths} in Alg. \ref{alg:greedy} are highlighted in blue.}
    \label{fig:Fig_1}
\end{figure} 
We start by discretizing time such that for each time window, there exists a time discretization in $[e_i, l_i]$. If an agent arrives at a customer before $e_i$, we allow for waiting. We create variables $x_{i,s,j,t}$ representing ``an agent leaves customer $i$ at time $s$ and subsequently leaves customer $j$ at time $t$''. We duplicate the depot to create the original depot $0$ and the final depot $N$. The set $T$ is the set of time discretizations, and the set $W$ is the set of customers, which excludes depots. For any variable $x_{i,s,j,t}$ we can compute $b_{ij} = \max\{e_j, s + d_{ij}\}$where $d_{ij}$ is the distance between customers $i$ and $j$. $b_{ij}$ is the earliest possible time we can start servicing customer $j$, having left customer $i$ at time $s$. We use an arc-based formulation from \cite{QVRP-7}: the constrained binary optimization is shown in Eq. \eqref{eq:qubo_fsvrptw}.

The objective is in \eqref{eq:obj_qubo_fsvrptw}. Constraint \eqref{eq:qubo_cover} ensures that all customers are visited. Constraint \eqref{eq:flow} enforces flow between consecutive customers. The remaining constraints are pre-processing and domain constraints.
\begin{subequations} \label{eq:qubo_fsvrptw}
\begin{alignat}{2}
& \min_{x_{i,s,j,t} \in \{0,1\} } && \sum_{j\in W} \sum_t x_{0,0,j,t} \label{eq:obj_qubo_fsvrptw}
\\
& & & \text{s.t. } \sum_{i\in\{0,W\},s,t} x_{i,s,j,t} = 1 , \,\, \forall j \in W \label{eq:qubo_cover}
\\
& & & \sum_{j\in\{0,W\},j\neq i, t} x_{j,t,i,s} = \sum_{j\in\{W, N\},j\neq i, t} x_{i,s,j,t}, \, \forall i \in W, \forall s \label{eq:flow}
\\ 
& & & x_{i,s,0,t} = 0, \, \forall s,t,\, \forall i \in \{0, W, N\} \label{eq:no_enter_original_depot}
\\ 
& & & x_{N,s,j,t} = 0, \, \forall s,t , \, \forall j \in \{0, W, N\} \label{eq:no_leave_final_depot}
\\ 
& & & x_{0,s,j,t} = 0, \, \forall s \neq 0 , \forall t, \forall j \in \{0, W, N\} \label{eq:leave_original_time_0}
\\ 
& & & x_{i,s,j,t} = 0, \, \forall i,s,t, \forall j \in \{0, W, N\}: b_{ij} > l_j, \label{eq:time_consistent_late}
\\ 
& & & x_{i,s,j,t} = 0, \, \forall i,s, t, \forall j \in \{0, W, N\} : t < b_{ij} + q_j \label{eq:time_consistent_early}
\\ 
& & & x_{i,s,i,t} = 0, \, \forall s, t, \forall i \in \{0, W, N\} \label{eq:no_self_loop}
\\ 
& & & x_{i,0,j,t} = 0, \, \forall i \neq 0, \forall t, \forall j \in \{0, W, N\} \label{eq:no_leave_cust_at_time_0}
\\ 
& & & x_{i,s,j,0} = 0, \, \forall i, s, \forall j \in \{0, W, N\} \label{eq:no_leave_final_depot_time_0}
\\ 
& & & x_{0,s,N,t} = 0, \, \forall s,t \label{inv_cust}.
\end{alignat}
\end{subequations}

\subsection{Pre-processing} \label{sec:pre_proc}
Our formulation initially may appear to have a formidable number of variables. However, by making a few observations, we are able to prune many of the variables by fixing their values to $0$ and removing them from the model. The variables we can prune correspond to constraints (\ref{eq:no_enter_original_depot})-(\ref{inv_cust}) and are described in Table \ref{tab:prune}.

\begin{table} 
\caption{Constraint and Description for Pre-processing variables in Problem \eqref{eq:qubo_fsvrptw}}
\label{tab:prune}
\begin{tabular}{@{}ll@{}}
\multicolumn{2}{c}{}\\
\toprule
Constraint & Description \\ \midrule
(\ref{eq:no_enter_original_depot}) & Cannot enter original depot \\ 
(\ref{eq:no_leave_final_depot}) & Cannot leave final depot \\ 
(\ref{eq:leave_original_time_0}) & Can only leave the original depot at time 0 \\ 
(\ref{eq:time_consistent_late}) & Cannot be late to service customer $j$ \\ 
(\ref{eq:time_consistent_early}) & Cannot leave customer $j$ before end of service time \\ 
(\ref{eq:no_self_loop}) & No self-loops \\ 
(\ref{eq:no_leave_cust_at_time_0}) & Cannot leave customers at time 0 \\ 
(\ref{eq:no_leave_final_depot_time_0}) & Cannot leave any nodes at time 0, after node $i$ \\ 
(\ref{inv_cust}) & Cannot travel immediately from original depot to final depot \\ \bottomrule
\end{tabular}
\end{table}

\section{Methods}
\subsection{Greedy Quantum Route-Generation Algorithm}
\label{sec:greedy}
Our algorithm for solving Problem~\eqref{eq:qubo_fsvrptw} is shown in Alg. \ref{alg:greedy}. We first observe that any subset of the variables in our formulation can be represented as a directed acyclic graph. By combining this with the samples obtained from the annealer, we can converge quickly to a feasible solution. The main idea is the following: the annealer recommends a set of variables to be used in constructing a DAG, and a classical routine is used to find feasible sub-paths within the DAG. The variables corresponding to customers along each of the paths are pruned. Variables that haven't been pruned are still ``active''. At each iteration, the set of active variables, $X$, gets smaller, while the cumulative set of feasible paths returned $S$ gets larger. The algorithm terminates when $X^l=\emptyset$, although, in practice, it may be terminated when the number of active variables is small enough to solve the problem exactly.

We use $Q$ to denote the new set of edge-and-vertex-disjoint paths found in the DAG and $P$ to denote a single path in $Q$. Paths consist of a sequence of tuples $\{(i^P_k, s_k)\}_{k=0}^{|P|}$. We use $i^P_k$ to denote the customer at index $k$ of path $P$, and no superscript is used when clear from the context.
Let $X^l$ denote the set of variables ``active'' at iteration $l$, $S^l$ denote the cumulative set of paths found in iterations prior to $l$, and $Q^l$ the set of paths found at iteration $l$. We use (\ref{eq:qubo_cover}, $i$) to denote the coverage constraints for customer $i$, and (\ref{eq:flow}, $i,s$) to denote the flow constraints for customer $i$ at timestep $s$. DAG($X$) denotes the DAG formed by the variables in set $X$. QUBO($X$) denotes solving Problem \eqref{eq:qubo_fsvrptw} using annealing over the set of variables $X$. $\mathcal{X}$ denotes the samples returned by the annealer. $A$ and $M$ are the annealing time and number of samples, respectively. They are implicitly inputs to QUBO($X$). When clear from context, $i$ is the index of a single variable in $X^l$. We use $R^\frown R'$ to denote the concatenation of two paths $R$ and $R'$. The final set of paths returned as a solution to the FSVRPTW \eqref{eq:qubo_fsvrptw} is $S$.
\begin{algorithm}
 \caption{Greedy Quantum Route-Generation Algorithm}\label{alg:greedy}
 \begin{algorithmic}[1]
 \Require $X^0=\{x_{i,s,j,t}\}$ initially, all variables ``active'', $\theta$, $M$, number of annealing samples, $A$, annealing time. 
 \Ensure $S,\text{paths found}$
 \State{$S^0 \gets \emptyset $} 
 \While{$X^l \neq \emptyset$}
 \State{$\mathcal{X}^l \gets$ QUBO($X^l$)} \label{step:annealing}
 \For{$i \in 1....|X^l|$} 
 \State{Compute $\frac{\langle Z_i \rangle + 1}{2}$ over all $M$ samples} \label{exp1}
 \EndFor \label{exp_end}
 \State $\bar{X}^l, \gets$ Select $\theta$ fraction of $X^l$ with highest $\frac{\langle Z_i \rangle + 1}{2}$ \label{select}
 \State $Q^l \gets$ Find paths in DAG($\bar{X}^l$) \label{find_paths} 
 \State $S^{l+1},X^{l+1} \gets $ \Call{Prune}{$S^l,X^l,Q^l$} \label{prune}
 \EndWhile
 \end{algorithmic}
\end{algorithm}
Step \ref{exp_end} computes a statistic used in step \ref{select} to determine which variables will participate in the DAG for finding paths. In our algorithm, we used one-body expectation values, although two-body expectation values may be used \cite{QC-12}. Parameter $\theta \in (0,1)$ is given by the user to determine the variables in $X^l$ to select at each iteration. $\theta$ may be provided as a portion of the variables to choose or as a threshold for expectation values. In our running example and our analysis, we specify $\theta$ as a portion.
For step \ref{find_paths}, any sub-routine may be used that guarantees that no two paths returned visit the same customer.  Two examples are: returning the single longest path or returning multiple paths. An iterative procedure may return multiple paths: 1. Select the longest path in the DAG. 2. Remove all nodes (and their adjacent edges) in the DAG involving any of the customers along the currently selected longest path (excluding depots $0, N$), 3. Remove the nodes and arcs along the currently selected longest path from the DAG. 4. Repeat until the DAG is empty. In the simulation results, we returned multiple paths using this procedure. Heuristics such as returning a single longest path are more conservative and allow more flexibility for the algorithm to adapt in consecutive iterations but may lead to longer computation time.
\vspace{-0.5cm}
\begin{algorithm}
\caption{\textproc{Prune}}\label{alg:prune}
\begin{algorithmic}[1]
 \renewcommand{\algorithmicrequire}{\textbf{Input:}}
 \renewcommand{\algorithmicensure}{\textbf{Output:}}
\Require $S$, set of paths found previously, $Q$, new set of paths found, $X$, current set of active variables.
\Ensure $S'$, new set of paths found, $X'$, new set of active variables, QUBO$'$, new Problem \eqref{eq:qubo_fsvrptw} formulation
 \For {$P \in Q$} \label{exp2}
 \State QUBO $\gets \Call{Concat}{S,P,\text{QUBO}}$\label{step:concat} 
 \For {$(i_k, s_k) \in P$}
 \State $X' \gets $ Apply pruning rules in \S \ref{sec:pruning_rules}. \label{step:prune}
 \If {$k \notin \{0,|P|\}$} 
 \State QUBO $\gets$ remove constraint (\ref{eq:flow},$i_k,s_k$) \label{step:remove_flow}
 \EndIf
 \If {$k\neq0$}
 \State QUBO $\gets$ remove constraint (\ref{eq:qubo_cover},$i_k$) \label{step:remove_coverage}
 \EndIf
 \EndFor
 \EndFor
\end{algorithmic}
\end{algorithm}
\subsection{DAG Representation}
Any set of active variables $X^l$ may be represented as follows: A single node is created for each unique tuple $(i,s)$, where $i \in \{0, W, N\}$ and $s \in T$. A directed arc is present from $(i,s)$ to $(j,t)$ for each variable $x_{i,s,j,t} \in X^l$.
By constraints \eqref{eq:time_consistent_late} and \eqref{eq:time_consistent_early}, after the pre-processing described in \S \ref{sec:pre_proc} variables only remain active if $s<t$. There cannot exist a cycle; otherwise, it would imply that there exists a variable $x_{i,s,j,t}$ such that $s \geq t$. DAG($X^0$) for our running example is shown in Fig. \ref{fig:Fig_1}.
\subsection{Pruning} 
\label{sec:pruning}
The subroutine \Call{Prune}{$S, X, Q$} is shown in Alg. \ref{alg:prune}.
Notice that the constraints (\ref{eq:qubo_cover}) and (\ref{eq:flow}) are separable by customer $j$, and tuples $(i,s)$ respectively. So we may prune variables and remove constraints independently for each tuple $(i,s)$, along a path in steps \ref{step:prune}, \ref{step:remove_flow}, and \ref{step:remove_coverage} in Alg. \ref{alg:prune}.
If a path is found that shares an endpoint (either start or end) of a path in $S$, the paths are concatenated by \Call{Concat}{} and the appropriate constraints are removed, as shown in Alg. \ref{alg:concat}. 
\subsubsection{Pruning Rules} \label{sec:pruning_rules}
Throughout this section, the phrase ``prune'' means the variable is set to the value specified and removed from the set of active variables $X^l$. 
For customer $i_k$ of tuple $(i_k, s_k)$, variables involving customer $i_k$ may be classified as:
\begin{enumerate}
\item $i=i_k$: "outgoing" variables.
\item $j=i_k$: "incoming" variables.
\end{enumerate}
Index $k \in 0,...|P|$ along a path $P$ may be classified
\begin{enumerate}
\item $k \notin \{0,|P|\}$: ``interior'' indices.
\item $k \in \{0,|P|\}$: ``exterior'' indices.
\end{enumerate}
\begin{algorithm}
\caption{\textproc{Concat}}\label{alg:concat}
\begin{algorithmic}[1]
 \renewcommand{\algorithmicrequire}{\textbf{Input:}}
 \renewcommand{\algorithmicensure}{\textbf{Output:}}
\Require $S$, $P$, QUBO
\Ensure QUBO$'$, $S$
 \For {$P' \in S$} \label{exp3}
 \If {$i^P_0 = i^{P'}_{|P'|}$ for $(i^P_0,s) \in P$} 
 \State $P' \gets P'\,^\frown P$ \label{join_at_end}
 \State QUBO$\gets$Remove constraints (\ref{eq:flow}, $i^P_0,s$)
 \EndIf
 \If {$i^P_{|P|} = i^{P'}_0$ for $(i^P_{|P|}, s) \in P$} 
 \State $P' \gets P ^\frown P'$ \label{join_at_start}
 \State QUBO$\gets$Remove constraints (\ref{eq:flow}, $i^P_{|P|},s$), (\ref{eq:qubo_cover},$i^P_{|P|}$)
 \EndIf
 \EndFor
\end{algorithmic}
\end{algorithm}
\begin{figure}[h]
 \begin{subfigure}{0.45\textwidth}
 \includegraphics[width=\linewidth]{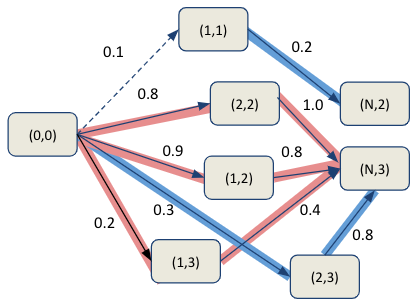}
 \caption{Arcs in paths found in step \ref{find_paths} of Algorithm \ref{alg:greedy}  are highlighted blue, and variables along the path are set to $1$ and pruned.  Arcs highlighted in red are set to $0$ and pruned.  Paths found are $P_1=\{(1,1),(N,2)\}$, $P_2=\{(0,0),(2,3),(N,3)\}$.} \label{fig:pruning_example_1}
 \end{subfigure}%
 \vspace*{\fill} 
 \begin{subfigure}{0.425\textwidth}
 \includegraphics[width=\linewidth]{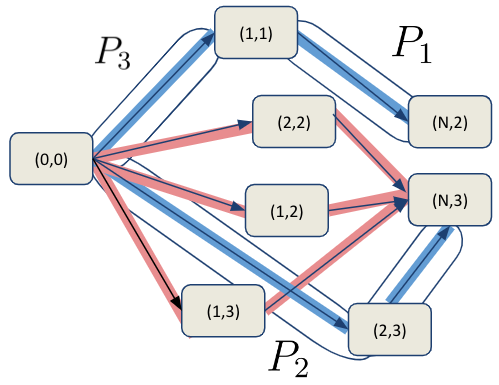}
 \caption{After pruning in the previous iteration, only variable $x_{0,0,1,1}$ remains active. Path $P_3=\{(0,0),(1,1)\}$ is found in step \ref{find_paths} of Alg. \ref{alg:greedy}, and joined with path $P_1$ in step \ref{join_at_start} of Alg. \ref{alg:concat}.} \label{fig:pruning_example_3}
 \end{subfigure}%
\caption{Pruning example for $\theta$=0.9, and subsequent iteration of Alg. \ref{alg:greedy}}
\end{figure}
\begin{table}[H]
\caption{Pruning Rules Applied to Fig. \ref{fig:pruning_example_1}}
\label{tab:pruning_rules}
\centering
\begin{tblr}{
 cell{2}{1} = {r=6}{},
 cell{2}{2} = {r=5}{},
 cell{2}{3} = {r=5}{},
 cell{8}{1} = {r=4}{},
 cell{9}{2} = {r=2}{},
 cell{9}{3} = {r=2}{},
 hline{1,14} = {-}{0.08em},
 hline{2,8,12} = {-}{},
 hline{7,9,11} = {2-5}{},
}
Path & $k$ & $(i_k, s_k)$ & Variables Pruned & Pruning Rule \\
$P_1$ & $0$ & $(1,1)$ & $x_{1,1,N,2}=1$ & Along path \\
 & & & $x_{0,0,1,2}=0$ & exterior \ref{exterior_first_not_depot} \\
 & & & $x_{0,0,1,3}=0$ & exterior \ref{exterior_first_not_depot} \\
 & & & $x_{1,3,N,3}=0$ & exterior \ref{exterior_first_not_depot} \\
 & & & $x_{1,2,N,3}=0$ & exterior \ref{exterior_first_not_depot} \\
 & $1$ & $(N,2)$ & None & exterior \ref{exterior_last_depot} \\ 
$P_2$ & $0$ & $(0,0)$ & $x_{0,0,2,3}=1$ & Along path \\
 & $1$ & $(2,3)$ & $x_{2,3,N,3}=1$ & Along path \\
 & & & $x_{0,0,2,2}=0$ & interior \ref{incoming_final_depot} \\ 
 & $2$ & $(N,3)$ & $x_{2,2,N,3}=0$ & interior \ref{outgoing_diff_s} \\
\end{tblr}
\end{table}
Using this terminology we iterate over tuples $(i_k, s_k)$ and prune by the following rules:

\textbf{Along path}: If variable $x_{i_k, s_k, i_{k+1}, s_{k+1}}$ is an arc along the path for two consecutive tuples $(i_k, s_k), (i_{k+1}, s_{k+1})$, it is set to $1$ and pruned.

If $k$ is \textbf{``interior''}:
\begin{enumerate}
\item ``outgoing'' variables are set to $0$ and pruned if either:\label{outgoing}
\begin{enumerate}
\item $i = i_k$ and $s \neq s_k$ OR \label{outgoing_diff_s} 
\item $i=i_k$, and $s=s_k$, and either:
\begin{enumerate}
\item $j \neq i_{k+1}$OR
\item $t \neq s_{k+1}$
\end{enumerate}
\end{enumerate}
\item ``incoming'' variables are set to $0$ and pruned if either:\label{incoming}
\begin{enumerate}
\item $j = i_k$ and $t \neq s_k$ OR \label{incoming_final_depot}
\item $j=i_k$, and $t=s_k$, and either:\label{incoming_not_final_depot}
\begin{enumerate}
\item $i\neq i_{k-1}$ OR
\item $t \neq s_{k-1}$
\end{enumerate}
\end{enumerate}
\end{enumerate}
If $k$ is \textbf{``exterior''}:
\begin{enumerate}
\item $k=0$:
\begin{enumerate}
\item $i_k =0$: Only prune and set $=1$ if $x_{i_k, s_k, i_{k+1}, s_{k+1}}$ is \textbf{Along path}. (Variables incoming to the depot were pruned by (\ref{eq:no_enter_original_depot})). \label{exterior_first_depot}
\item $i_k \neq 0$: Prune using rules ``interior'' \ref{outgoing} and \ref{incoming_final_depot}. \label{exterior_first_not_depot}
\end{enumerate}
\item $k=|P|$:
\begin{enumerate}
\item $i_k =N$: Prune using ``interior-incoming'' \ref{incoming_final_depot}. (We do not prune according to rules \ref{incoming_not_final_depot} because incoming arcs to final depot $N$ at different times are allowed, as shown in Fig. \ref{fig:Fig_1}). \label{exterior_last_depot}
\item $i_k \neq N$: Prune incoming variables under interior rule \ref{incoming}, and outgoing variables under interior rule \ref{outgoing_diff_s}.\label{exterior_last_not_depot}
\end{enumerate}
\end{enumerate}

\section{Analysis} \label{sec:analysis}

\begin{lemma} \label{even_prune}
In any iteration $l$, if a customer $i^P_k$ is in the interior of path $P \in S^l$, all outgoing and incoming variables to that customer have been pruned.
\end{lemma}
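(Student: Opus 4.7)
The plan is to induct on the iteration index $l$ and exploit the key observation that the exterior pruning rules applied when $(c, s)$ sits at the start of a path and when it sits at the end of a path are complementary: their union coincides with the interior rules. The base case $l = 0$ is vacuous because $S^0 = \emptyset$.

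For the inductive step, fix a customer $c = i^P_k$ that lies in the interior of some $P \in S^{l+1}$ at tuple $(c, s_k)$. I would distinguish two cases by how $c$ came to be interior in $P$. Case A (interior at insertion): at some prior iteration $l' \le l+1$, a path in which $(c, s_k)$ already sat at an interior index was processed by Alg.~\ref{alg:prune}. The for-loop of \textproc{Prune} then applied interior rules \ref{outgoing} and \ref{incoming} at $(c, s_k)$, which set the along-path arcs $x_{\ast,\ast,c,s_k}$ and $x_{c,s_k,\ast,\ast}$ to $1$ and pruned every remaining outgoing variable ($i = c$) and every remaining incoming variable ($j = c$). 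Pruned variables stay pruned, so the claim holds.

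Case B (interior via concatenation): $(c, s_k)$ became interior because two path fragments were glued at it by \textproc{Concat}. Writing $P = P_A \,{}^\frown P_B$ around $(c, s_k)$, I want to argue that some past iteration processed a path ending at $(c, s_k)$ (an exterior-last occurrence) and some past iteration processed a path starting at $(c, s_k)$ (an exterior-first occurrence). This follows because each of $P_A$ and $P_B$ was itself created at some iteration, and a straightforward secondary induction on the depth of concatenation unwinds any nested concatenations back to two processings at which $(c, s_k)$ was a true endpoint of a newly found path. Since $c$ is a customer, $c \notin \{0, N\}$, so the applicable rules are \ref{exterior_first_not_depot} for the $k=0$ processing and \ref{exterior_last_not_depot} for the $k=|P|$ processing.

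The heart of the argument, and where I expect the main obstacle to lie, is the bookkeeping that justifies the previous paragraph — one must verify that after arbitrarily many concatenations across iterations, each side of the eventual join at $(c, s_k)$ was in fact seen at some point as a true endpoint of a newly processed path, so that exterior rules actually fired there. Once that is cleanly set up, the remainder is a direct read-off: rule \ref{exterior_last_not_depot} invokes interior rule \ref{incoming} (pruning every incoming variable to $c$) together with interior rule \ref{outgoing_diff_s} (pruning outgoing variables with $s \neq s_k$), and rule \ref{exterior_first_not_depot} invokes interior rule \ref{outgoing} (pruning every outgoing variable from $c$) together with interior rule \ref{incoming_final_depot} (pruning incoming variables with $t \neq s_k$). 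Their union prunes every outgoing and every incoming variable of $c$, which is the conclusion of the lemma. A minor edge case worth recording is that since $c$ is a customer rather than a depot, rules \ref{exterior_first_depot} and \ref{exterior_last_depot} never need to be invoked, and by coverage constraint \eqref{eq:qubo_cover} together with the vertex-disjoint path extraction in step \ref{find_paths}, $c$ appears at most once in any path, so there is no ambiguity in the tuple $(c, s_k)$.
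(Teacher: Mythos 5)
Your proposal is correct and follows essentially the same route as the paper: the paper also splits into the case where $i^P_k$ was already interior to one of the constituent fragments (interior/along-path rules) and the case where it was an endpoint of both fragments, combining exterior rule \ref{exterior_last_not_depot} (all incoming, plus outgoing with $s\neq s_k$) with exterior rule \ref{exterior_first_not_depot} (all outgoing, plus incoming with $t\neq s_k$). The only difference is presentational: your induction on iterations and on concatenation depth makes explicit the bookkeeping that the paper compresses into ``assume without loss of generality that $P = R\,^\frown R'$''.
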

\begin{proof} Assume without loss of generality that $P$ was formed by the union of paths $R$ and $R'$, found at iterations $l^R$ and $l^{R'}$, respectively. $P=R ^\frown R'$. For customer $i^P_k$ there are two cases:
\begin{enumerate}
 \item $i^P_k$ was in the interior of either path $R$ or $R'$. So it was either pruned to $1$ by the ``Along path'' rules or pruned to $0$ by the ``interior'' rules in \S \ref{sec:pruning}.
 \item $i^P_k$ was in the exterior of of either path $R$ or $R'$. The incoming variables to $i^P_k$ were pruned in iteration $l^R$ by ``exterior'' rule \ref{exterior_last_not_depot} and the outgoing variables were pruned in iteration $l^{R'}$ by ``exterior'' rule \ref{exterior_first_not_depot}.
\end{enumerate}
\end{proof}
\begin{lemma} \label{cust_appear}
For every customer $i'$, $\exists$ iteration $l'$ in which the paths $Q^{l'}$ returned in step \ref{find_paths} of Alg. \ref{alg:greedy} will contain a path with a tuple $(i',s)$ for some $s$.
\end{lemma}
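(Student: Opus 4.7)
The plan is to argue by contradiction: suppose customer $i'$ never appears on any path $Q^l$ across all iterations of Alg. \ref{alg:greedy}. The first step is to trace which variables incident to $i'$ could ever be removed from $X^l$. Going through the rules in \S \ref{sec:pruning_rules}, a variable $x_{i,s,i',t}$ or $x_{i',s,j,t}$ can be pruned only via the interior/exterior rules applied either to a tuple $(i_k,s_k)$ with $i_k = i'$ (which we have assumed never occurs), or to a tuple at the other endpoint. The key observation is that at depot $0$, exterior rule \ref{exterior_first_depot} prunes only along-path variables, so arcs of the form $x_{0,0,i',t}$ that are not along-path cannot be eliminated via the other endpoint. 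Under the contradiction hypothesis, these variables therefore remain active in $X^l$ for every $l$.

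The next step is to exploit that the coverage constraint (\ref{eq:qubo_cover},$i'$) is likewise never removed, since step \ref{step:remove_coverage} removes the coverage of $i_k$ only when $(i_k,s_k) \in P$ for some $P \in Q^l$. Hence in each call to QUBO in step \ref{step:annealing}, at least one incoming variable to $i'$ must equal $1$ in any feasible sample. This forces at least one such variable to have a strictly positive $\frac{\langle Z_i\rangle + 1}{2}$ across the $M$ samples. Meanwhile, by Lemma \ref{even_prune} the variables incident to already-visited customers are progressively removed from $X^l$, so the variables incident to $i'$ come to constitute a larger and larger portion of $X^l$ and must eventually fall within the top-$\theta$ fraction selected in step \ref{select}.

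Finally, at such an iteration $l'$ the set $\bar X^{l'}$ consists only of arcs incident to $i'$ (possibly together with a small residual set of other never-visited customers and the two depots), so every source-to-sink path in DAG$(\bar X^{l'})$ must traverse a node $(i',s)$. The iterative longest-path procedure used in step \ref{find_paths} must then return at least one path containing $(i',s)$, contradicting the assumption and completing the proof.

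The main obstacle will be the second step: rigorously translating the presence of an unsatisfied coverage constraint into a guaranteed positive one-body expectation value for some incoming variable of $i'$, in the face of a noisy or approximate annealer. The argument may require a mild assumption on the QUBO solver (e.g., that it satisfies unsatisfied constraints with some positive probability across its $M$ samples), or alternatively a counting argument that, once all non-$i'$-adjacent variables have been pruned, even arbitrary selection within $X^{l'}$ yields arcs incident to $i'$ in $\bar X^{l'}$.
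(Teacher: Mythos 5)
The central mechanism of your middle step is where the argument breaks, as you yourself suspected. From the fact that constraint (\ref{eq:qubo_cover},$i'$) is never removed you cannot conclude that some incoming variable of $i'$ acquires a strictly positive one-body expectation: the QUBO solved in step \ref{step:annealing} only penalizes constraints, and nothing guarantees that any of the $M$ samples is feasible, or even nonzero --- the paper's own proof has to treat the all-zero-sample case separately and dismisses it only by assuming a sufficiently large annealing time. Hence every variable incident to $i'$ can have expectation $0$ for arbitrarily many iterations, and ``must eventually fall within the top-$\theta$ fraction'' does not follow from their growing share of $X^l$ unless you abandon the expectation-based reasoning and argue purely by counting (your own fallback). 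Your final step also overclaims: if other never-visited customers survive alongside $i'$, then DAG$(\bar X^{l'})$ can contain source-to-sink paths avoiding every $(i',s)$ (e.g.\ $0\to i''\to N$), so a single such iteration does not finish the proof; each such path does, however, strictly shrink $X^l$ and pushes more customers into the interior of paths of $S$ (Lemmas \ref{even_prune} and \ref{even_interior}), so you must iterate and invoke finiteness of the customer and variable sets.

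The repaired argument you sketch in your last sentence is essentially the paper's proof, which makes no use of the coverage constraint or of sample feasibility at all: in every iteration either all samples are identically zero (eventually excluded by the annealing-time assumption), or some variable equals $1$ in some sample, so $\bar X^l$ is nonempty, paths are found in step \ref{find_paths}, pruning occurs, and $X^{l+1}\subsetneq X^l$; since by Lemma \ref{even_prune} all variables of customers interior to paths of $S$ are pruned, after finitely many iterations the only active variables are those containing a tuple $(i',s)$, at which point any path returned must contain such a tuple. Your step-one observation --- that under the contradiction hypothesis the arcs $x_{0,0,i',t}$ can never be pruned, because exterior rule \ref{exterior_first_depot} prunes only along-path variables at the depot --- is a genuinely useful addition: it makes explicit (where the paper leaves implicit) that the surviving set of $i'$-variables is nonempty when the counting argument terminates. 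Keep that observation, drop the coverage/expectation mechanism, and route the rest through the strict-shrinkage argument.
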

\begin{proof}
Every set of samples returned from the annealer $\mathcal{X}^{l}$ in iteration $l$ there are 2 cases:
\begin{enumerate}
\item Annealer returns 0 for all variables, for all samples. Then, no variables are pruned, and the while loop resumes.\label{all_zero}
\item There exists a sample $\mathcal{X}^{l}_m \in \mathcal{X}^{l} $ with at least one variable $x_{i,s,j,t}=1$. Then at least one variable will be in $\bar{X}^l$ since $\theta \in (0,1)$. \label{exists_one}
Then there are two cases: 
\begin{enumerate}
\item $\bar{X}^l$ does not contain a variable $x_{i,s,j,t}$ with either $i'=i$ or $i'=j$. Variables containing customers along paths returned by step \ref{find_paths} will be pruned according to the pruning rules in \S \ref{sec:pruning_rules}, and the while loop resumes its next iteration with $X^{l+1} \subset X^{l}$. \label{cust_not_selected} 
\item $\bar{X}^l$ contains a variable $x_{i,s,j,t}$ with either $i'=i$ or $i'=j$. \label{cust_selected} 
Then there are two cases: 
\begin{enumerate}
\item The paths returned by step \ref{find_paths} contain a tuple $(i',s)$ for some $s$. \label{contain}
\item They do not. \label{not_contain}
\end{enumerate} 
\end{enumerate}
\end{enumerate}
In case \ref{not_contain}, the variables containing customers along paths returned by step \ref{find_paths} will be pruned according to the pruning rules in \S \ref{sec:pruning_rules}, and the while loop resumes its next iteration with $X^{l+1} \subset X^{l}$.
If in case \ref{contain}. QED.
If the annealing time $A$ is sufficiently large, \cite{albash2018adiabatic} eventually, we will exit case \ref{all_zero}.
If we continue to be in cases \ref{cust_not_selected} or \ref{not_contain} above, then paths returned in step \ref{find_paths} in Alg. \ref{alg:greedy} do not contain tuple $(i',s)$, and in the next iteration $X^{l+1} \subset X^{l}$. By Lemma \ref{even_prune}, all variables in the interior of previously found paths $S$ will be pruned. At some iteration $l'$, the only variables not pruned are those containing tuple $(i',s)$, in which case they must appear in the paths returned in step \ref{find_paths}. 
\end{proof}
\begin{lemma} \label{even_interior}
If customer $i^R_k$ was in the exterior of path $R\in Q^l$ in iteration $l$, then it will be in the interior of some path $P\in S^{l'}$ for an iteration $l'>l$. 
\end{lemma}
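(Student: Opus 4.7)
The plan is to characterize the active variables touching $i^R_k$ that survive iteration $l$'s pruning, invoke the shrinking-DAG reasoning from Lemma~\ref{cust_appear} on the tail of the execution to guarantee a later iteration in which $i^R_k$ reappears in a found path, and then show that this reappearance triggers \textsc{Concat} to join the new path with the descendant of $R$, placing $i^R_k$ in the interior of a path in $S^{l'+1}$.

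First I would pin down the surviving arcs. Suppose $i^R_k$ sits at position $k=0$ of $R$; since $i^R_k \in W$ we have $i^R_k \neq 0$, so exterior rule~\ref{exterior_first_not_depot} applies. This invokes interior rule~\ref{outgoing}, which prunes every outgoing arc from customer $i^R_k$ regardless of departure time, and interior rule~\ref{incoming_final_depot}, which prunes every incoming arc whose arrival time differs from $s_k$. Hence in $X^{l+1}$ the only active variables incident to $i^R_k$ are incoming arcs into the single tuple $(i^R_k, s_k)$. Symmetrically, if $k=|R|$ with $i^R_k \neq N$, exterior rule~\ref{exterior_last_not_depot} leaves only outgoing arcs from $(i^R_k, s_k)$ alive.

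Next I would replay the argument of Lemma~\ref{cust_appear} restricted to iterations $l'' > l$: if $i^R_k$ never appeared in any subsequent $Q^{l''}$, then by Lemma~\ref{even_prune} every variable of interior customers of paths in $S^{l''}$ is eventually pruned, and the surviving one-sided arcs at $(i^R_k, s_k)$ would eventually be the only arcs left in the DAG, forcing the next nonempty $Q^{l''}$ to include $i^R_k$---a contradiction. So some $l' > l$ produces a path $R' \in Q^{l'}$ visiting $i^R_k$, and by the first step $R'$ must carry $(i^R_k, s_k)$ at the exterior opposite to the one it held in $R$ (end of $R'$ if $k=0$, start of $R'$ if $k=|R|$). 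When Alg.~\ref{alg:prune} then processes $R'$, \textsc{Concat} iterates over the path $P \in S^{l'}$ containing $R$ as a subpath; the matched customer $i^R_k$ at opposing endpoints fires exactly one of lines~\ref{join_at_end} or~\ref{join_at_start} of Alg.~\ref{alg:concat}, producing a concatenated path in $S^{l'+1}$ in which $i^R_k$ lies between a predecessor and a successor, hence in the interior.

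The principal obstacle is the ``strictly later'' requirement: Lemma~\ref{cust_appear} alone only guarantees some iteration of appearance, possibly $l$ itself. The one-sided pruning from the first step is the key structural fact that enables the strengthening---because only arcs touching the single tuple $(i^R_k, s_k)$ survive, any source tuple $(i,s)$ of a surviving incoming arc $x_{i,s,i^R_k,s_k}$ must either be used by a future path continuing along that arc (giving the desired reappearance of $i^R_k$) or stay unvisited long enough for the shrinking-DAG argument to isolate the neighborhood of $(i^R_k, s_k)$ and force $i^R_k$ into the very next found path.
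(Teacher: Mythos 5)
Your proof is correct and follows essentially the same route as the paper's: the exterior pruning rules leave only one-sided arcs at the single tuple $(i^R_k,s_k)$, Lemma~\ref{cust_appear} then yields a later path carrying that tuple at the opposite endpoint, and \textsc{Concat} joins it to the path containing $R$, placing $i^R_k$ in the interior. Your explicit treatment of the ``strictly later iteration'' point is a minor refinement of a detail the paper leaves implicit, not a different approach.
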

\begin{proof}
If $k =0$, then all outgoing variables and incoming variables to $i^R_0$ with $t \neq s^R_0$ have been pruned in ``exterior'' rule \ref{exterior_first_not_depot}. So, the only active variables involving $i^R_0$ have $(j,t)=(i^R_0, s^R_0)$. By Lemma \ref{cust_appear}, in iteration $l'$ path $R' \in Q^{l'}$ will contain $(i^R_0, s^R_0)$. This can only happen if $i^{R'}_{|R'|} = i^R_0$. In step \ref{join_at_start} in Alg. \ref{alg:concat}, these paths are unioned, and the resulting path $P = R^{'\,^\frown} R$ will have $i^R_0$ in its interior. The proof for $k=|R|$ is the similar, except $i^{R'}_0 = i^R_{|R|}$, and the resulting path is $P = R ^\frown R'$.
\end{proof}
\begin{lemma} Alg. \ref{alg:greedy} converges to a feasible solution to Problem \eqref{eq:qubo_fsvrptw}. \label{feasible}
\end{lemma}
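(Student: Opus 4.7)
The plan is to verify the two structural requirements for feasibility of Problem~\eqref{eq:qubo_fsvrptw}: (i) every customer $i \in W$ is covered exactly once by the paths in the returned set $S$ (constraint~\eqref{eq:qubo_cover}), and (ii) each maximal path in $S$ is a consistent flow chain from depot $0$ to depot $N$, so that the remaining flow constraints~\eqref{eq:flow} are either satisfied pointwise or explicitly removed by \Call{Prune}{} and \Call{Concat}{}. All other constraints are eliminated during the pre-processing of \S\ref{sec:pre_proc} or automatically respected by the DAG structure (since arcs only exist for variables that already pass the time-consistency tests~\eqref{eq:time_consistent_late}--\eqref{eq:time_consistent_early}), so they need not be re-checked.

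First I would combine Lemma~\ref{cust_appear} and Lemma~\ref{even_interior} to argue coverage. For every customer $i' \in W$, Lemma~\ref{cust_appear} gives an iteration $l'$ in which some path in $Q^{l'}$ contains a tuple $(i',s)$, and Lemma~\ref{even_interior} guarantees that if $i'$ is exterior at that moment then a later iteration supplies a path that concatenates with it so $i'$ becomes an interior node of some $P \in S^{l''}$. Lemma~\ref{even_prune} then ensures that once $i'$ is interior all of its incident variables are pruned, so no subsequent path can revisit it, giving exactly-once coverage. For flow conservation I would track when each constraint~\eqref{eq:flow} is dropped: the flow constraint at every interior tuple $(i_k,s_k)$ is removed in step~\ref{step:remove_flow} of Alg.~\ref{alg:prune}, and the flow constraint at each concatenation point is removed inside Alg.~\ref{alg:concat}. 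The only flow constraints that could survive to termination are those at tuples never made interior; the coverage step forces any such tuple to have no active incident variables left in $X$, so both sides of~\eqref{eq:flow} trivially evaluate to $0$.

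Next I would establish termination. Each outer iteration either loops trivially (the all-zero case~\ref{all_zero} from the proof of Lemma~\ref{cust_appear}) or prunes at least one variable, forcing $|X^{l+1}| < |X^l|$. Under the annealing-time assumption already invoked in Lemma~\ref{cust_appear}, the trivial case cannot persist indefinitely, so $|X^l|$ strictly shrinks infinitely often and, since $|X^0|$ is finite, the while-loop ends with $X^l = \emptyset$. At termination the $1$-valued variables group into the paths of $S$, and by coverage every $i \in W$ lies in exactly one path. Combined with Lemma~\ref{even_interior}, every non-depot exterior customer has been absorbed into a longer path, so each maximal path in $S$ terminates at a depot; exterior rules~\ref{exterior_first_not_depot} and~\ref{exterior_last_not_depot} then force the starting depot to be $0$ and the ending depot to be $N$.

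The main obstacle I expect is the bookkeeping around concatenation: I need to confirm that whenever Lemma~\ref{even_interior} promises a path $R'$ whose endpoint matches an exterior tuple $(i^R_k, s^R_k)$ of a previously found path $R$, the match is tight in both the customer index and the time index so that Alg.~\ref{alg:concat} can actually perform the join, and that the flow constraint at the shared tuple is removed exactly once (neither missed nor double-removed). The exterior pruning rules in \S\ref{sec:pruning_rules} pin down a unique surviving time index at any exterior customer, which is what makes the join well-defined, but spelling out this alignment precisely across possibly several rounds of concatenation is the delicate part of the argument. Once that alignment is certified, coverage, flow conservation, and the depot-to-depot endpoint condition combine to yield feasibility of Problem~\eqref{eq:qubo_fsvrptw}.
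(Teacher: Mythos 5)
Your proposal follows essentially the same route as the paper's proof: it invokes Lemma~\ref{cust_appear} and Lemma~\ref{even_interior} to conclude every customer eventually becomes interior to a path in $S$, and then uses the pruning rules (one incoming and one outgoing variable set to $1$, all others to $0$, with untouched flow constraints reducing to $0=0$) to verify constraints~\eqref{eq:qubo_cover} and~\eqref{eq:flow}. Your additional remarks on termination, depot endpoints, and the time-index alignment at concatenation only make explicit details the paper leaves implicit in Lemma~\ref{even_interior} and Alg.~\ref{alg:concat}, so the argument is correct and matches the paper's.
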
 
\begin{proof} Observe that all variables contain a customer by pre-processing constraint \eqref{inv_cust}. By Lemma \ref{even_interior} at some iteration, any customer $i$ will be in the interior of some path $P \in S$. By the pruning rules, when a customer is in the interior of a path $P \in S$, we have set exactly two variables involving customer $i$ equal to $1$: one incoming and one outgoing. All other variables involving customer $i$ have been pruned and set to $0$. Thus, constraint \eqref{eq:qubo_cover} is satisfied. By the same argument, if $(i,s)$ is along a path in $S$ returned by Alg. \ref{alg:greedy}, then exactly one variable on each side of constraint (\ref{eq:flow},$i,s$) will be $1$ - all others have been pruned by rules ``interior'' \ref{outgoing} and \ref{incoming}. 
If $(i,s)$ is not along a path, then constraints (\ref{eq:flow}, $i,s$) will become $0=0$ and will be satisfied again by rules ``interior'' \ref{outgoing} and \ref{incoming}. Since constraints \eqref{eq:qubo_cover} and \eqref{eq:flow} are satisfied, the paths $S$ returned by Alg. \ref{alg:greedy} are a feasible solution to Problem \eqref{eq:qubo_fsvrptw}.
\end{proof}
\begin{lemma} If in Alg. \ref{alg:greedy} step \ref{step:annealing} is solved to optimality, and multiple longest paths are returned using the procedure described in section \ref{sec:greedy} then Alg. \ref{alg:greedy} converges to the optimal solution to Problem \eqref{eq:qubo_fsvrptw} in a single iteration. \label{exact_solver}
\end{lemma}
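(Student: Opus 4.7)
The plan is to show that a single pass of Alg. \ref{alg:greedy} extracts the arcs of an exact optimum of Problem \eqref{eq:qubo_fsvrptw} and empties the active set along the way. First I would unpack the structural consequence of ``solved to optimality'': every sample $\mathcal{X}^0_m$ encodes some optimal $x^\ast$, and by constraints \eqref{eq:qubo_cover} and \eqref{eq:flow} the arcs $\{x_{i,s,j,t}=1\}$ of $x^\ast$ form a vertex-disjoint collection of paths in DAG($X^0$), each starting at $(0,0)$, ending at some $(N,\cdot)$, and together covering every customer of $W$ exactly once. Consequently, every variable index $i$ corresponding to such an arc has $\frac{\langle Z_i\rangle+1}{2}=1$, while variables that are $0$ in every optimal sample have expectation $0$.

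Next I would argue that the set $\bar{X}^0$ chosen in step \ref{select} coincides with the arcs of one optimum $x^\ast$ provided $\theta$ is chosen compatibly, e.g., as a strict threshold above $0$ (as discussed in \S \ref{sec:greedy}), or as a fraction large enough to include every variable of expectation $1$. Then DAG($\bar{X}^0$) is already the vertex-disjoint union of the routes of $x^\ast$. Applying the multiple-longest-paths subroutine from \S \ref{sec:greedy} peels off one route at a time: the longest path in the current DAG is some route of $x^\ast$, and deleting that route's customers leaves the remaining routes intact because disjoint routes share no customers. Iterating exhausts DAG($\bar{X}^0$), so $Q^0$ equals the full set of routes of $x^\ast$.

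Finally, Alg. \ref{alg:prune} applied to $Q^0$ visits every customer $i\in W$ as some tuple $(i_k,s_k)$, and the rules in \S \ref{sec:pruning_rules} set every variable with $i=i_k$ or $j=i_k$ either to $1$ (along the path) or to $0$ (otherwise). Constraint \eqref{inv_cust} removes the only depot-to-depot arc, so every active variable in $X^0$ involves at least one customer and is therefore pruned; hence $X^1=\emptyset$ and the while loop exits. The returned $S=Q^0$ realizes $x^\ast$ and is an optimal FSVRPTW solution. The main obstacle is non-uniqueness of the QUBO optimum: averaging across samples can make expectations fractional and render the ``top $\theta$ fraction'' ambiguous, so the cleanest form of the argument either fixes a single optimal sample or interprets $\theta$ as a threshold, consistent with the flexibility noted in \S \ref{sec:greedy}.
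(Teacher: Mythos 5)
Your proposal is correct and follows essentially the same route as the paper's proof: the exact solve in the first iteration yields expectation values of $1$ exactly on the arcs of an optimum, a threshold-style $\theta$ selects precisely those variables, the multiple-longest-paths peeling recovers all routes (hence all customers), and pruning empties the active set so the algorithm terminates with the exact solution, which is feasible and optimal. You merely spell out details the paper leaves implicit (the route structure of the optimum, why peeling returns whole routes, and why constraint \eqref{inv_cust} forces $X^1=\emptyset$), and your caveat about non-unique optima is resolved the same way the paper resolves it, by taking a single optimal sample and interpreting $\theta$ as a threshold.
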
 
\begin{proof} We give the proof for a sub-routine returning multiple longest paths as described in section \ref{sec:greedy}.  Observe that in the first iteration, step \ref{step:annealing} solves Problem \eqref{eq:qubo_fsvrptw} with all variables active, so it is equivalent to solving Problem \eqref{eq:qubo_fsvrptw}. If step \ref{step:annealing} is solved to optimality, then it returns a single sample, so the one-body expectation values are all $1$. If $\theta$ is taken to be a threshold strictly between $0$ and $1$, then $\bar{X}^l$ will only contain variables assigned a value of $1$ by the exact solver.  
Notice that by returning multiple longest paths as described in \S \ref{sec:greedy}, the collection of paths in $Q^l$ will contain all customers. 
Observe that step \ref{step:prune} will prune the variables in DAG($\bar{X}^l$) to a value of $1$, and any variables not in DAG($\bar{X}^l$) to a value of $0$, which is the same as the solution returned by the exact solver. 
Since $Q^l$ contains paths with all customers, all variables will be pruned, so Alg. \ref{alg:greedy} will terminate. From Lemma \ref{feasible}, the result will be feasible. Since an exact solver was used, it is also optimal.
\end{proof}

\begin{table}
 \centering
 \renewcommand{\arraystretch}{1.3}
 \caption{Notation and Solver descriptions}
 \begin{tabular}{|l|l|}
 \hline
 \textbf{Notation} & \textbf{Description} \\ \hline
 greedy+QPU & Alg. \ref{alg:greedy} with Step \ref{step:annealing} solved using \texttt{Advantage2} \\ \hline
 greedy+CQM & Alg. \ref{alg:greedy} with Step \ref{step:annealing} solved using \texttt{LeapHybridCQM} \\ \hline
 greedy+SAS & Alg. \ref{alg:greedy} with Step \ref{step:annealing} solved using \texttt{SimulatedAnnealingSampler} \\ \hline
 greedy+Exact & Alg. \ref{alg:greedy} with Step \ref{step:annealing} solved using \texttt{Gurobi} \\ \hline
 QPU & Problem \eqref{eq:qubo_fsvrptw} solved using \texttt{Advantage2} \\ \hline
 CQM & Problem \eqref{eq:qubo_fsvrptw} solved using \texttt{LeapHybridCQM} \\ \hline
 SAS & Problem \eqref{eq:qubo_fsvrptw} solved using \texttt{SimulatedAnnealingSampler} \\ \hline
 \end{tabular}
 \label{tab:notation}
\end{table}

\section{Computational Results} \label{sec:comp_results}
We benchmark with both classical and quantum annealing approaches for solving Problem \ref{eq:qubo_fsvrptw}.  
In Algorithm \ref{alg:greedy}, an annealer or a MIP solver may be used in step \ref{step:annealing}.  The solvers described in section \ref{sec:prelim} were used to either solve Problem \ref{eq:qubo_fsvrptw} or were used to solve step \ref{step:annealing} in Algorithm \ref{alg:greedy}.  The notation for each solution method is in Tab. \ref{tab:notation}. For the solution methods QPU, CQM, and SAS, a feasible solution was obtained by sorting samples by energy and selecting the first feasible solution.
To evaluate the performance of our proposed algorithm for practical-sized problems, Solomon's benchmark instances were used \cite{solomon1987algorithms}, briefly described as graphs with nodes that must be serviced within a time window, placed according to a distribution. Instance types $R101$/$R201$ indicate narrow/wide time windows resp. We modify an instance to suit our problem as follows: Randomly sample $N$ customers. For each value of $N$, construct ten random examples independently and aggregate their statistics.

\subsection{Results - greedy+CQM, SAS, and CQM} \label{sec:res_greedy_SAS_CQM} 
From Fig. \ref{fig:perc_optimal_r101}, and \ref{fig:perc_optimal_r201}, we can see that the greedy+CQM approach enjoys a small relative optimality gap as compared to SAS and CQM, in particular for large problem sizes $N$>15. This difference is even more pronounced in the $R$201 examples. For $N$=25, SAS returned feasible solutions for only 2 of the ten random examples, and $N$=50, it returned no feasible solutions. Statistics are aggregated only over feasible solutions.
Fig. \ref{fig:perc_speedup_r101}, and \ref{fig:perc_speedup_r201} show the relative time difference between the greedy algorithm and CQM/SAS. Positive values indicate that the greedy algorithm is faster. For large problem instances, greedy is always faster than SAS. And CQM is sometimes faster by a small margin for $N$>10. 

\begin{table}[H]
    \centering
    \renewcommand{\arraystretch}{1.3}
    \caption{\textbf{greedy+QPU Results:} $t$ is absolute time (sec), $\alpha$ is rel. optimality gap, $X$ is the number of logical/physical qubits.}
    \begin{tabular}{|l|l|l|l|l|l|l|l|l|}
    \hline
        \textbf{$N$} & \textbf{$t_{R101}$} & \textbf{$\alpha_{R101}$} & \textbf{$X_{log,R101}$} & \textbf{$X_{phys,R101}$} & \textbf{$t_{R201}$} & \textbf{$\alpha_{R201}$} & \textbf{$X_{log,R201}$} & \textbf{$X_{phys, R201}$} \\ \hline
        \textbf{5} & 0.08 & 0 & 44 & 415 & 0.071 & 0 & 24 & 356 \\ \hline
        \textbf{6} & 0.074 & 0 & 84 & 601 & 0.075 & 0 & 78 & 585 \\ \hline
        \textbf{7} & 0.08 & 10 & 109 & 890 & 0.08 & 10 & 101 & 792 \\ \hline
        \textbf{8} & 0.09 & 17 & 121 & 955 & 0.11 & 13 & 109 & 834 \\ \hline
        \textbf{9} & 0.09 & 33 & 127 & 1021 & 0.09 & 27 & 128 & 998 \\ \hline
    \end{tabular}
    \label{tab:qpu_results}
\end{table}
\begin{figure}[H]
 \begin{subfigure}{0.48\textwidth}
 \includegraphics[width=\linewidth]{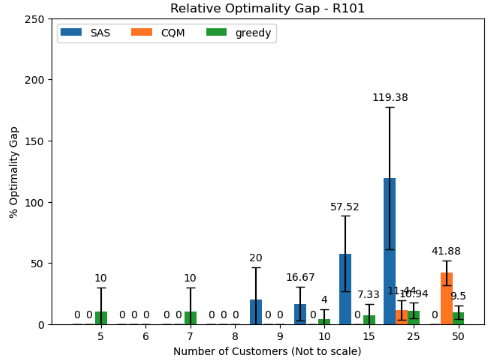}
 \caption{Solomon R101 Examples} \label{fig:perc_optimal_r101}
 \end{subfigure}\hfill
 \begin{subfigure}{0.49\textwidth}
 \includegraphics[width=\linewidth]{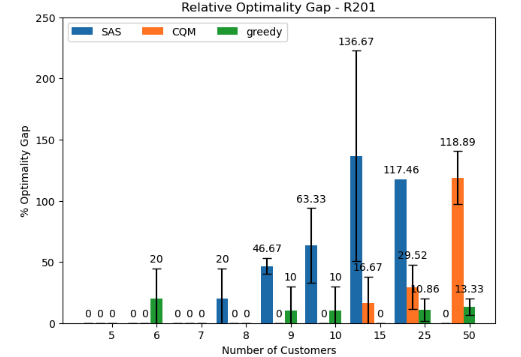}
 \caption{Solomon R201 Examples} \label{fig:perc_optimal_r201}
 \end{subfigure}%
\caption{Average relative optimality gap for SAS, and CQM and greedy+CQM, averaged over 10 examples with standard deviation. The relative optimal value is computed as $\frac{C_{A}-C_{opt}}{C_{opt}}$ where $C_A$ is the objective value found by the annealing method, and $C_{opt}$ is the optimal value found by \texttt{Gurobi}. Statistics are computed only over feasible solutions. Data not shown for SAS, $N$=50 - no feasible solutions found.}
\end{figure}

\begin{figure}[H]
\centering
 \begin{subfigure}{0.495\textwidth}
 \includegraphics[width=\linewidth]{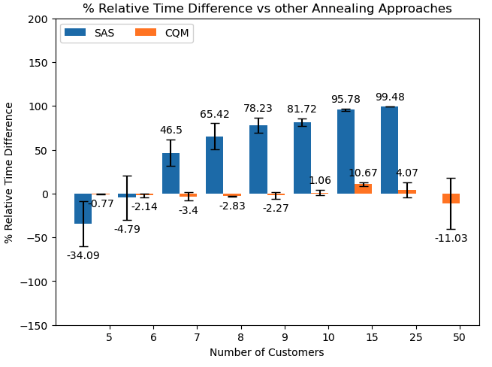}
 \caption{Solomon R101 Examples} \label{fig:perc_speedup_r101}
 \end{subfigure}%
 \begin{subfigure}{0.495\textwidth}
 \includegraphics[width=\linewidth]{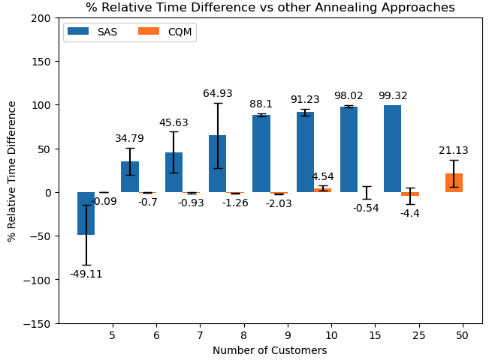}
 \caption{Solomon R201 Examples} \label{fig:perc_speedup_r201}
 \end{subfigure}%
\caption{Relative time difference between greedy+CQM and SAS /CQM averaged over ten examples. Rel. time difference is computed as $\frac{t_{A}-t_{greedy}}{t_{A}}$ where $t_A$ is the time of SAS/CQM, and $t_{greedy}$ is  time taken of greedy+CQM. Statistics are computed only over feasible solutions. Data not shown for SAS, $N$=50, because no feasible solutions found.}
\end{figure}

\begin{table}[H]
    \centering
    \renewcommand{\arraystretch}{1.3}
    \caption{\textbf{Results for greedy+QPU for R101 and R201}; $\alpha$ is \% rel. optimality gap, $v$ is \% valid solutions, $t$ is embedding time}
    \begin{tabular}{|l|l|l|l|l|l|l|l|l|l|l|}
    \hline
        \textbf{$N$} & \textbf{$\alpha_{R101}$} & \textbf{$v_{greedy,R101}$} & \textbf{$t_{emb,R101}$} & \textbf{$\alpha_{R201}$} & \textbf{$v_{greedy,R201}$} & \textbf{$t_{emb,R201}$} \\ \hline
        \textbf{5} & 0.0 $\pm$ 0.0 & 100 & 0.09 $\pm$ 0.01 & 0.0 $\pm$ 0.0 & 100 & 2.31 $\pm$ 1.40 \\ \hline
        \textbf{6} & 10.0 $\pm$ 2.1 & 100 & 2.44 $\pm$ 1.5 & 8.0 $\pm$ 4.1 & 100 & 0.16 $\pm$ 0.04 \\ \hline
        \textbf{7} & 2.5 $\pm$ 1.4 & 100 & 1.8 $\pm$ 0.3 & 12.4 $\pm$ 3.5 & 80 & 9.85 $\pm$ 6.66 \\ \hline
        \textbf{8} & 0.0 $\pm$ 0.0 & 40 & 2.7 $\pm$ 1.1 & 7.0 $\pm$ 4.2 & 20 & 0.16 $\pm$ 0.0 \\ \hline
        \textbf{9} & NaN & 0 & NaN & NaN & 0 & NaN \\ \hline
    \end{tabular}
    \label{tab:greedy_DWaveSampler}
\end{table}

\begin{table}[H]
 \centering
 \renewcommand{\arraystretch}{1.3}
 \caption{Computation Time (sec) - greedy+SAS}
 \begin{tabular}{|l|l|l|l|l|}
 \hline
 \textbf{$N$} & \textbf{$t_{\text{R101,greedy+SAS}}$} & \textbf{$t_{\text{R101,CQM}}$} & \textbf{$t_{\text{R201,greedy+SAS}}$} & \textbf{$t_{\text{R201,CQM}}$} \\ \hline
 \textbf{5} & 3.29 & 5.09 & 4.12 & 5.11 \\ \hline
 \textbf{6} & 6.03 & 5.06 & 9.01 & 5.14 \\ \hline
 \textbf{7} & 10.32 & 5.07 & 13.64 & 5.21 \\ \hline
 \textbf{8} & 19.51 & 5.2 & 25.14 & 5.33 \\ \hline
 \textbf{9} & 41.1 & 5.3 & 65.05 & 5.45 \\ \hline
 \textbf{10} & 51.14 & 5.6 & 93.61 & 5.63 \\ \hline
 \textbf{15} & 230.58 & 6.77 & 618.08 & 6.74 \\ \hline
 \textbf{25} & 21.6 & 10.96 & 43.1 & 10.93 \\ \hline
 \textbf{50} & 433.16 & 65.88 & 744.42 & 104.67 \\ \hline
 \end{tabular}
 \label{tab:greedy_SAS}
\end{table}

\begin{figure}[H]
 \begin{subfigure}{0.45\textwidth}
 \includegraphics[width=\linewidth]{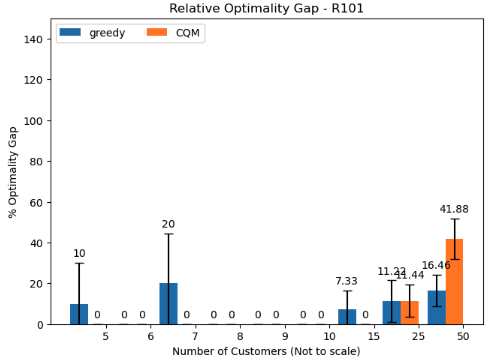}
 \caption{Solomon R101 Examples} \label{fig:greedy_SAS_perc_optimal_r101}
 \end{subfigure}%
 \begin{subfigure}{0.45\textwidth}
 \includegraphics[width=\linewidth]{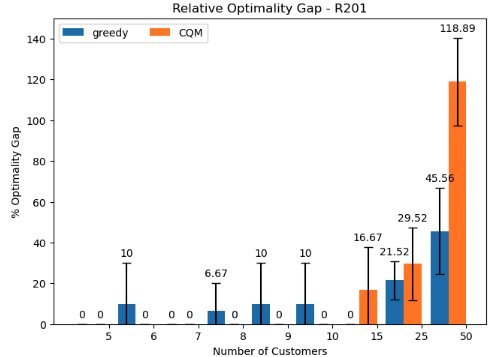}
 \caption{Solomon R201 Examples} \label{fig:greedy_SAS_perc_optimal_r201}
 \end{subfigure}%
\caption{Average relative optimality gap comparing greedy+SAS and CQM averaged over ten examples with standard deviation.}
\end{figure}

\subsection{Results - greedy+SAS vs CQM}
To show the effectiveness of our decomposition, greedy+SAS and CQM are also compared. The relative optimality gap for various problem sizes is shown in Fig. \ref{fig:greedy_SAS_perc_optimal_r201}, and the computation times are shown in Table \ref{tab:greedy_SAS}. At small problem sizes, the rel. The optimality gap remains low, and at large problem sizes, the optimality gap for the greedy algorithm is significantly lower than that of CQM. This trend is even more pronounced for the $R$201 examples, showing that the decomposition performs better than solving Problem \eqref{eq:qubo_fsvrptw} using a state-of-the-art hybrid solver CQM.

\subsection{Results - greedy+QPU vs CQM}
In Tab. \ref{tab:greedy_DWaveSampler}, we can see results averaged over ten examples for solving greedy+QPU. Because the process of embedding the logical problem on the physical device is non-deterministic, the following procedure was used for each problem size: For each of the ten random examples, the embedding process was executed 10 times. The embedding using the fewest number of physical qubits was used to execute the remaining iterations of Algorithm \ref{alg:greedy}.  The average embedding time was recorded and added to the time taken to complete the remaining iterations of the algorithm. The statistics for optimality gap and computation time are in Tab. \ref{tab:greedy_DWaveSampler}. For problem sizes $N \in \{5,6,7\}$, a valid embedding was found for all ten random examples. For $N=8$, $40\%$/$20\%$ of the examples, a valid embedding was found for $R$101/$R$201, respectively, and for $N=9$, none of the examples found a valid embedding. Given the current QPU capabilities, a hybrid solver for the QUBO subproblems is necessary for practical-scale problems. For the same examples, CQM always found the optimal solution. 

\subsection{Results - greedy+QPU vs QPU}

To evaluate robustness to noise, greedy+QPU and QPU were compared. For a fair benchmark, the parameters of the \texttt{DWaveSampler} were tuned in pure QPU. An annealing pause was used by setting the schedule to \texttt{[0.0, 0.0], [10.0, 0.4], [50.0, 0.4], [120, 1.0]}. Annealing pauses are shown to perform well for scheduling problems \cite{marshall2019power}. Finally, to increase the chance of the \texttt{DWaveSampler} finding a feasible solution, both the annealing time and the number of samples were adjusted to the largest values allowed by D-Wave's QPU access time: $50 \mu s$ and $M$=2000, respectively. In  greedy+QPU, we left the annealing parameters in \texttt{DWaveSampler} to their defaults. 
Even after these enhancements \texttt{DWaveSampler} still fails to find a feasible solution in any of the ten examples for problem sizes greater than N=5 (with/without pausing). For N=5, only 20\% of the solutions were feasible (not shown in Tab. \ref{tab:qpu_results}). With pausing \cite{marshall2019power}, surprisingly, this probability decreased to 10\%. However, greedy+QPU still maintains an average relative optimality gap below 33\%. The total time of Alg. \ref{alg:greedy} and the average relative optimality gap are shown in Tab. \ref{tab:qpu_results}.  The average number of logical/physical qubits is also reported ($X$) to show the scaling of this formulation on the binary variables.  As $N$ increases, the number of logical qubits begins to attenuate. Only one discretization value within each time window is needed. As a timetable includes more customers, more of the time windows overlap - avoiding the need to create additional time discretizations.

\section{Conclusions and Future Work}
This work presents the first effective algorithm for generating routes using an annealing-based approach. We demonstrated its efficiency by solving the Fleet Sizing Vehicle Routing Problem with Time Windows (FSVRPTW). We prove that it converges to a feasible solution, and we benchmark it against state-of-the-art classical and hybrid annealing-based approaches. We showed that it enjoys a smaller optimality gap than other annealing-based approaches at a competitive computation time, even for practical-sized problems. 
It also shows to be noise-robust when executed on a QPU, taking advantage of the entire sample set returned by the annealer. 
Natural directions for future work include determining the dependence of the proposed algorithm's performance on the dataset, e.g., using different time discretizations and graphs generated from other distributions. We may also explore the effect of the control parameters, e.g., $\theta$. These future directions will provide insight into how this adaptive algorithm may be used to solve a more general class of problems outside of routing and scheduling.

\section{Acknowledgements}
This work was supported in part by the PWICE fellowship of USC.

\bibliographystyle{unsrt}  
\bibliography{references}

\end{document}